\documentclass{llncs}

\usepackage[pdftex]{graphicx}
\usepackage{url}
\usepackage{ascmac,bm}
\usepackage{amsmath,amssymb}
\usepackage{float}
\usepackage{algorithm,algorithmic}
\usepackage{extarrows}

\spnewtheorem*{lem}{Lemma}{\bfseries}{\rmfamily}
\spnewtheorem*{defn}{Definition}{\bfseries}{\rmfamily}
\spnewtheorem*{prop}{Proposition}{\bfseries}{\rmfamily}
\spnewtheorem*{thm}{Theorem}{\bfseries}{\rmfamily}

\newcommand{\ala}{{\it \`a la} }
\newcommand{\bala}{{\it \textbf{\`a la}} }

\newcommand{\ie}{{\it i.e.}}

\newcommand{\nat}{\mathbb{N}}

\newcommand{\hole}{[\,]}
\newcommand{\yield}{Y}
\newcommand{\nodes}{N}
\renewcommand{\root}{R}

\newcommand{\decomps}{\Delta}
\newcommand{\decomp}[1]{\decomps\!\left(#1\right)}
\newcommand{\adj}[1]{\mathrm{Adj}^*\!\!\left(#1\right)}
\newcommand{\Adj}[1]{\mathrm{Adj}^+\!\!\left(#1\right)}
\newcommand{\defeq}{\triangleq}

\newcommand{\vatree}{{\((V, A)\)-tree}}

\newcommand{\trees}[1]{{\cal T}\!\left(#1\right)}
\newcommand{\lang}[1]{{\cal L}\!\left(#1\right)}
\newcommand{\strees}[1]{{\cal S}\!\left(#1\right)}
\newcommand{\sloops}[1]{{\cal A}\!\left(#1\right)}

\begin{document}
\title{Simple proof of Parikh's theorem \bala  Takahashi}
\author{Ryoma Sin'ya}
\institute{Akita University\\
\email{ryoma@math.akita-u.ac.jp}}
\maketitle

\begin{abstract}
In this report we describe a simple proof of Parikh's theorem \ala Takahashi, based on a decomposition of derivation trees.
The idea of decomposition is appeared in her master's thesis written in 1970.
\end{abstract}

\section{Preliminaries}
For a set \(S\), we denote by \(|S|\) the cardinality of \(S\). The set of
natural numbers including \(0\) is denoted by \(\nat\).
Let \(G = (V, D, X_0)\) be a context-free grammar over an alphabet
\(A\) where \(V \, (V \cap A = \emptyset)\) is a finite set of
\emph{non-terminals}, \(D \subseteq V \times (V \cup A \cup \{\epsilon\}
)^+\) is a finite set of \emph{derivation rules}, and \(X_0 \in V\).
The set of \emph{{\vatree}s}, ranged over by \(T\), is given by the
following grammar:
\[
 T ::= a \,\,\, (a \in A \cup \{\epsilon\}) \mid X(T_1, \cdots, T_n) \,\,\, (X \in V, n \geq 1)
\]
Namely, {\vatree}s are trees whose internal nodes are non-terminals, and
whose leaves are letters in \(A\) or the special symbol \(\epsilon
\notin A\).
For a {\vatree} \(T\), we denote by \(\nodes(T)\)
the set of all non-terminals appeared in \(T\),
and denote by \(\root(T)\) the root of \(T\).
The \emph{yield} \(Y\) is a function from
{\vatree}s into \(A^*\) defined inductively as
$Y(a) = a, Y(\epsilon) = \varepsilon$ where \(\varepsilon\) is the empty
string, and \(Y(X(T_1, \ldots, T_n))
 = Y(T_1) \cdots Y(T_n)\).
We call a \((V, A \cup \{\hole\})\)-tree
\(C\) \emph{context} if
exactly one leaf of \(C\) is the special symbol \(\hole \notin A\).
We denote by \(C[T]\) the {\vatree} obtained by replacing \(\hole\) in
\(C\) by \(T\).
We define \emph{the set \(\trees{G}\) of derivation trees of \(G\)} as
\begin{align*}
 \trees{G} \defeq \{ T\!:\, & \text{{\vatree}} \mid 
 \root(T) = X_0, \text{ for each context } C,\\
& \qquad
 T = C[X(T_1, \ldots, T_n)] \text{ implies }
  (X, \root(T_1) \cdots \root(T_n)) \in D\}
\end{align*}
and define \(\lang{G} \defeq \{ Y(T) \mid T \in \trees{G}\}\).
 
For a non-terminal \(X \in V\), we call a \((V, A \cup \{X\})\)-tree
\(\alpha \neq X\)
an \emph{adjunct tree}
if \(\root(\alpha) = X\) and exactly one leaf of \(\alpha\) is \(X\).
For a \vatree{} \(T\) and an adjunct tree \(\alpha\)
such that \(\root(T) = \root(\alpha)\),
we denote by \(\alpha[T]\) the
{\vatree} obtained by replacing the leaf \(X\) in \(\alpha\) by \(T\).
For a \vatree{} \(T\) and an adjunct tree \(\alpha\), if the root \(X\)
of \(\alpha\) is appeared in \(T\), \ie,
\(T = C[X(T_1, \ldots, T_n)]\) for some context \(C\) and {\vatree}s
\(T_1, \ldots, T_n\),
we say that \emph{\(\alpha\) is adjoinable to \(T\)}, and we say that
\emph{\(T' = C[\alpha[X(T_1, \ldots, T_n)]]\) is obtained from \(T\)
adjoining \(\alpha\)}
and write \(T \vdash_\alpha T'\).
Intuitively, an adjunct tree represents ``pump'' part,
and adjoining corresponds to ``pumping'' operation for trees.
For example,
\(X(Y(X), a)\) is adjoinable to \(Z(X(b))\) and we have
\(Z(X(b)) \vdash_{X(Y(X), a)} Z(X(Y(X(b)), a))\).

We call a {\vatree} \(T\) \emph{simple} if,
for any path in \(T\) from the root to a leaf,
no non-terminal appears more than once.
We call an adjunct tree \(\alpha\) \emph{simple} if,
for any path in \(T\) from a child of the root to a leaf, no
non-terminal appears more than once.
See Fig.~\ref{fig:trees} for example.
\(T_1\) is simple since all paths \(\{(Z, X, a), (Z, X, b)\}\) contain \(Z\)
and \(X\) exactly once.
\(T_2\) is not simple since the left-most path
\((X, Z, X, a)\) contains \(X\) twice.
However, the adjunct tree \(\alpha_1\), which is obtained by removing
the left-most leave \(a\) from \(T_2\) (\ie, \(X(a) \vdash_{\alpha_1}
T_2\)),
is simple since all paths from a child of the root to a leaf \(\{(Z,X), (Z,X,b), (Y,a), (Y,X,b)\}\)
contain no non-terminal more than once.

For a {\vatree} \(T\) and a set of adjunct
trees \(S\), we define
\begin{align*}
  \adj{T, S} &\defeq \{ T' \mid T = T_0 \vdash_{\alpha_1}
  T_1 \vdash_{\alpha_2} \cdots \vdash_{\alpha_k} T_k = T', k \in \nat,
 \{\alpha_1, \ldots, \alpha_k\} \subseteq S \}\\
 \Adj{T, S} &\defeq \{ T' \mid T = T_0 \vdash_{\alpha_1}
  T_1 \vdash_{\alpha_2} \cdots \vdash_{\alpha_k} T_k = T', k \in \nat,
 \{ \alpha_1, \ldots, \alpha_k \} = S \}
\end{align*}
Intuitively, \(\adj{T, S}\)
(resp. \(\Adj{T, S}\))
is the set of all {\vatree}s obtained from \(T\) adjoining each element
in \(S\) arbitrary number of times (resp. arbitrary \emph{positive} number of times).
Clearly,
\(\adj{T, S} = \bigcup_{U \subseteq S} \Adj{T, U}\)
and \(\Adj{T, \emptyset} = \{T\}\).
We say that \emph{\(S\) is adjoinable to \(T\)} if
\(\Adj{T, S}\) is non-empty. Notice that
if \(\Adj{T, S}\) is non-empty then
there exists \(T' \in \Adj{T, S}\) such that \(T'\)
 is obtained from \(T\) adjoining each element in \(S\) \emph{exactly
 once}, \ie,
 \(T_0 = T \vdash_{\alpha_1} T_1 \vdash_{\alpha_2} \cdots
 \vdash_{\alpha_{|S|}} T_{|S|} = T'\)
 and \(S = \{ \alpha_1, \ldots, \alpha_{|S|}\}\).
Moreover, such \(T' \in \Adj{T, S}\) contains every
 root non-terminal of \(\alpha \in S\),
 \(\Adj{T', S}\) is also non-empty
 and thus \(\Adj{T, S}\) should be infinite (if \(S\) is non-empty).
\begin{figure}[t]
\centering{\includegraphics[width=0.65\columnwidth]{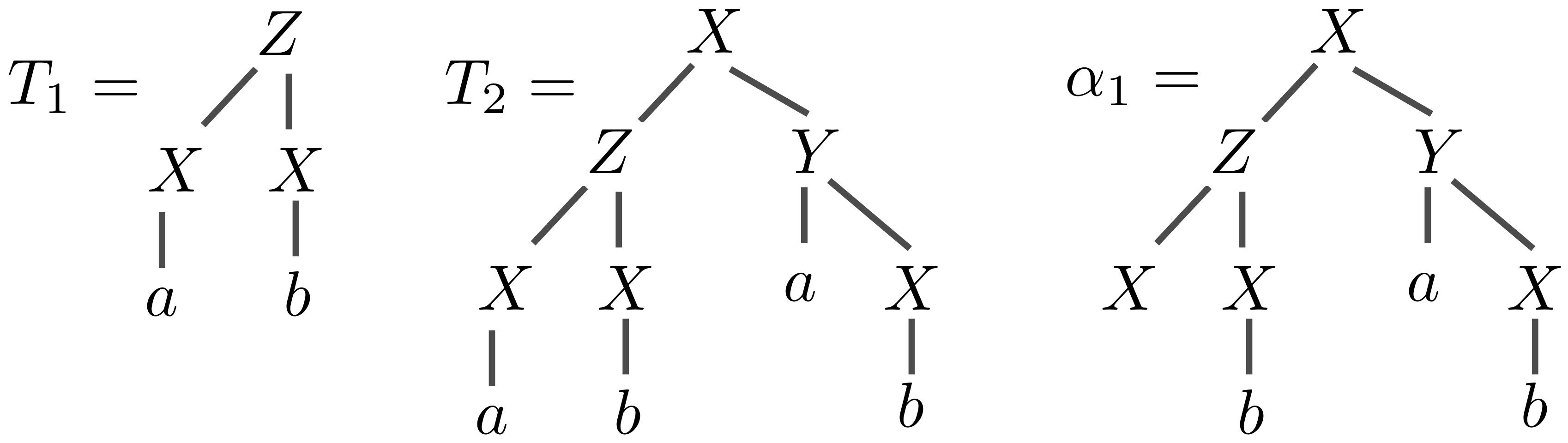}}
\caption{Example of simple \vatree{} \(T_1\), non-simple \vatree{} \(T_2\), and simple adjunct tree \(\alpha_1\)}
\label{fig:trees}
\end{figure}

Let \(A = \{a_1, \cdots, a_d\}\).
The \emph{Parikh mapping} \(\Phi_A: A^* \rightarrow \nat^{d}\) is defined by
\(\Phi_A(w) \defeq (|w|_{a_1}, \ldots, |w|_{a_d})\) where \(|w|_a\)
denotes the number of occurrences of \(a\) in \(w\).
For a {\vatree} \(T\) and an adjunct tree \(\alpha\) where \(X = \root(\alpha)\), we can naturally
extend the definition of the Parikh mapping as
\(\Phi_A(T) \defeq \Phi_A(\yield(T))\) and
\(\Phi_A(\alpha) \defeq \Phi_A(\yield(\alpha[X(\epsilon)]))\).
By definition, we have \(\Phi_A(\lang{G}) = \Phi_A(\trees{G})\) for any
context-free grammar \(G\).
A set \(S \subseteq \nat^d\) is called \emph{linear}
if \(S\) is of the form
\[
 S = \{\bm{v_0} + x_1 \bm{v_i} + \cdots + x_k \bm{v_k} \mid x_i \in \nat \text{ for
 each } i\}
 \] for some \(k \in \nat\) and some vectors
 \(\bm{v_0}, \bm{v_1}, \ldots, \bm{v_k} \in \nat^d\), and we call
a finite union of linear sets \emph{semilinear}.
\section{Proof \bala Takahashi}
\begin{defn}[decomposition]
A decomposition \(\decomp{T}\)
of a {\vatree} \(T\) is defined inductively as follows.
 If \(T = a \in A \cup \{\epsilon\}\), define $\decomp{T} \defeq (a, \emptyset)$.
If \(T = X(T_1, \ldots, T_n)\), let \((T'_1, S_1) = \decomp{T_1}, \ldots, (T'_n, S_n) = \decomp{T_n}\) and define
\begin{align*}
\decomp{T} \defeq \begin{cases}
			       (X(T'_1, \ldots, T'_n), S_1 \cup \cdots \cup S_n) & X \not\in
			       \nodes(T'_1) \cup \cdots \cup \nodes(T'_n)\\
			       (T', \{\alpha\} \cup S_1 \cup \cdots \cup S_n) & X \in \nodes(T'_1)
				       \cup \cdots \cup \nodes(T'_n)
				      \end{cases}
\end{align*}
where
 \(T'\) is the left-most \(X\)-rooted proper subtree of \(X(T'_1,
 \ldots, T'_n)\), \ie, the left-most \(X\)-rooted subtree of \(T'_i\)
 (where \(X \in \nodes(T'_i)\) and \(X \notin \nodes(T'_j)\) for each
 \(1 \leq j < i\)), and \(\alpha\) is the adjunct tree obtained by
 replacing \(T'\) by \(X\) in \(X(T'_1, \ldots, T'_n)\).
\end{defn}
See Fig.~\ref{fig:trees} for example. The non-simple tree \(T_2\) is decomposed as \(\decomp{T_2} = (X(a), \{\alpha_1\})\); it is clear that
\(X(a)\) is the left-most \(X\)-rooted proper subtree of \(T_2\) and \(X(a) \vdash_{\alpha_1} T_2\).

Let \(G = (V, D, X_0)\) be a context-free grammar over \(A\).
\begin{lem}\label{lemma}
For any \(T \in \trees{G}\) and \((T', S) = \decomp{T}\), 
 (1) \(T'\) is simple and \(T' \in \trees{G}\),
 (2) \(S\) is a set of simple adjunct trees, and
 (3) \(T \in \Adj{T', S} \subseteq \trees{G}\).
\end{lem}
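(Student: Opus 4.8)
The plan is to prove assertions (1)--(3) simultaneously by structural induction on $T$, in a form strengthened enough that the induction hypothesis also applies to the immediate subtrees of $T$ (which need not lie in $\trees{G}$, their roots not being $X_0$). Call a \vatree{} or an adjunct tree \emph{$G$-consistent} if each of its internal nodes, together with the sequence of roots of its children, constitutes a rule in $D$. Then $T \in \trees{G}$ iff $T$ is $G$-consistent and $\root(T) = X_0$, and every subtree of a $G$-consistent tree is $G$-consistent. The one property requiring a check is that adjoining a $G$-consistent adjunct tree to a $G$-consistent \vatree{} at an occurrence of its root produces a $G$-consistent \vatree{} with the same root; this holds because the surrounding context and the re-plugged subtree are left unchanged and their roots still match, while the internal nodes of the adjunct tree respect $D$. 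The strengthened claim is: for every $G$-consistent \vatree{} $T$, with $(T', S) = \decomp{T}$, one has (1') $T'$ is simple and $G$-consistent, with $\root(T') = \root(T)$; (2') $S$ is a finite set of simple $G$-consistent adjunct trees; (3') $T \in \Adj{T', S}$, and every member of $\Adj{T', S}$ is $G$-consistent with root $\root(T)$. The lemma follows by taking $\root(T) = X_0$.

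The base case $T = a \in A \cup \{\epsilon\}$ is immediate, since $(T', S) = (a, \emptyset)$ and $\Adj{a, \emptyset} = \{a\}$. For $T = X(T_1, \ldots, T_n)$, put $(T'_i, S_i) = \decomp{T_i}$; each $T_i$ is a $G$-consistent subtree of $T$, so by the induction hypothesis each $T'_i$ is simple and $G$-consistent with $\root(T'_i) = \root(T_i)$, each $S_i$ is a finite set of simple $G$-consistent adjunct trees, and $T_i \in \Adj{T'_i, S_i}$ with every member of $\Adj{T'_i, S_i}$ being $G$-consistent with root $\root(T_i)$. In the first case of the definition, $X \notin \nodes(T'_1) \cup \cdots \cup \nodes(T'_n)$, so $T' = X(T'_1, \ldots, T'_n)$ and $S = S_1 \cup \cdots \cup S_n$. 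A root-to-leaf path of $T'$ is $X$ followed by a root-to-leaf path of some $T'_i$, which is simple and avoids $X$, so $T'$ is simple; $T'$ is $G$-consistent because each $T'_i$ is and because the root of $T'$ uses the same rule of $D$ as the root of $T$ (as $\root(T'_i) = \root(T_i)$ for every $i$); and $T \in \Adj{T', S}$ by executing, one subtree at a time, the adjoining sequences witnessing $T_i \in \Adj{T'_i, S_i}$ --- each step remains legal, since the root of the adjunct tree it uses occurs in the subtree it acts on and hence in the whole tree, and the set of adjunct trees used is $S_1 \cup \cdots \cup S_n = S$. In the second case, $X \in \nodes(T'_i)$ for some $i$; then $T'$ is the left-most $X$-rooted proper subtree of $X(T'_1, \ldots, T'_n)$, which sits inside $T'_i$, and $\alpha$ is $X(T'_1, \ldots, T'_n)$ with $T'$ replaced by the leaf $X$, so $S = \{\alpha\} \cup S_1 \cup \cdots \cup S_n$. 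Being a subtree of $T'_i$, the tree $T'$ is simple and $G$-consistent, with root $X = \root(T)$. The adjunct tree $\alpha$ is $G$-consistent, since its internal nodes are among those of $X(T'_1, \ldots, T'_n)$, which is $G$-consistent as above; and $\alpha$ is simple because any path from a child of its root to a leaf is either a root-to-leaf path of some $T'_j$ (all simple) or the path inside $T'_i$ from its root down to $\root(T') = X$, which is a prefix of a root-to-leaf path of the simple tree $T'_i$ and hence carries $X$ only at its final node. Finally $T \in \Adj{T', S}$, because adjoining $\alpha$ to $T'$ plugs $T'$ back into the hole of $\alpha$ and so recovers exactly $X(T'_1, \ldots, T'_n)$, from which one proceeds as in the first case.

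In either case the last part of (3') --- that every member of $\Adj{T', S}$ is $G$-consistent with root $\root(T)$ --- follows by iterating, along an arbitrary adjoining sequence starting from $T'$, the fact recorded in the first paragraph that adjoining a $G$-consistent adjunct tree preserves both $G$-consistency and the root; the adjunct trees involved all lie in $S$ and are $G$-consistent by (2'), and $T'$ itself is $G$-consistent with root $\root(T)$ by (1'). The argument is conceptually light, and the effort is bookkeeping. The three points that need care are: the passage from $\trees{G}$ to arbitrary $G$-consistent (hence possibly non-$X_0$-rooted) subtrees, so that the induction hypothesis is available; the isolation of the notion of $G$-consistent adjunct tree, together with the check that $\decomp{\cdot}$ emits only such trees; and --- the most delicate --- the verification that the adjoining sequences for $T_1, \ldots, T_n$ can be run independently inside $X(T'_1, \ldots, T'_n)$, and that in the second case the freshly produced $\alpha$ adjoins to $T'$ so as to undo precisely the extraction of $T'$.
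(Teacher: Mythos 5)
Your proof is correct and takes exactly the route the paper intends: the paper's own proof consists of the single phrase ``Straightforward induction on \(T\)'', and your argument is a careful execution of that induction, including the one genuinely necessary refinement (strengthening the statement from \(\trees{G}\) to arbitrary \(G\)-consistent trees so that the induction hypothesis applies to subtrees whose root is not \(X_0\)). The details you check --- simplicity of the extracted adjunct tree via prefixes of root-to-leaf paths of the simple \(T'_i\), and the reassembly \(T' \vdash_{\alpha} X(T'_1,\ldots,T'_n)\) followed by the interleaved adjoining sequences for the subtrees --- are all sound.
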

\begin{proof}
Straightforward induction on \(T\).
\end{proof}

We define \(\strees{G} \defeq \{ T' \mid (T', S) =
\decomp{T} \text{ for some } T \in \trees{G}
 \text{ and } S \}\) and define
 \(\sloops{G} \defeq \{ \alpha \in S \mid
 (T', S) = \decomp{T}
 \text{ for some } T \in \trees{G}
 \text{ and } S \}\).
Because there are only finitely many simple {\vatree}s (resp.
simple adjunct trees),
\(\strees{G}\) and \(\sloops{G}\) are both finite by Claim (1)--(2) of
Lemma.
 
\begin{prop}[Takahashi \cite{takahashi}]
$\trees{G} = \bigcup_{T \in \strees{G}} \adj{T, \sloops{G}}$.
\end{prop}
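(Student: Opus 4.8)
The plan is to prove the two inclusions separately; the inclusion $\trees{G} \subseteq \bigcup_{T \in \strees{G}} \adj{T, \sloops{G}}$ is essentially a restatement of the decomposition lemma, whereas the reverse inclusion contains all the real work.

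For the forward inclusion, given $T \in \trees{G}$ I would set $(T', S) = \decomp{T}$. By the Lemma, $T'$ is simple and $T' \in \trees{G}$, hence $T' \in \strees{G}$ by the definition of $\strees{G}$; every $\alpha \in S$ is a simple adjunct tree occurring in the decomposition of a tree of $\trees{G}$, hence $S \subseteq \sloops{G}$ by the definition of $\sloops{G}$; and $T \in \Adj{T', S}$ by Claim (3). Since any adjoining sequence built from adjuncts of $S$ is in particular built from adjuncts of $\sloops{G}$, we have $\Adj{T', S} \subseteq \adj{T', S} \subseteq \adj{T', \sloops{G}}$, so $T$ lies in the right-hand union. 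Nothing else is needed here.

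For the reverse inclusion I must show that, starting from a simple tree of $\strees{G}$ and adjoining loop trees of $\sloops{G}$, one never leaves $\trees{G}$. The key point is that $\trees{G}$ is defined by a \emph{node-local} condition --- a $(V,A)$-tree is in $\trees{G}$ exactly when its root is $X_0$ and, for every internal node, the pair of that node's label and the word of its children's root labels lies in $D$ --- and that an adjoining step $U \vdash_\alpha U'$ never changes the label or the children of a node already present in $U$: it only splices in the internal nodes of $\alpha$, whose bottom $X$-leaf is refilled by a subtree again rooted at $X$, while $\alpha$'s root (labelled $X$) takes the place of the old $X$-node. Call an adjunct tree \emph{$D$-valid} if each of its internal nodes satisfies the local constraint of $D$. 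I would first observe that every $\alpha \in \sloops{G}$ is $D$-valid: a bad (non-$D$) internal node inside $\alpha$ would survive every subsequent adjoining, contradicting Claim (3) of the Lemma, which places some tree obtained from $\alpha$ inside $\trees{G}$; alternatively this is visible directly from the inductive definition of the decomposition. Then I would prove, by induction on $k$ along a sequence $T = U_0 \vdash_{\beta_1} U_1 \vdash \cdots \vdash_{\beta_k} U_k$ with $T \in \strees{G}$ and each $\beta_i \in \sloops{G}$, that $U_i \in \trees{G}$ for all $i$: the base case $U_0 = T \in \trees{G}$ is Claim (1) of the Lemma, and the inductive step holds because adjoining the $D$-valid adjunct $\beta_{i+1}$ to $U_i \in \trees{G}$ can only produce $D$-respecting internal nodes --- those of $\beta_{i+1}$ (their children-root-labels are unchanged), the refilled $X$-node (which keeps the production of $U_i$), and, if present, the context node above it (which still has an $X$-labelled child) --- without altering the root symbol $X_0$. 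Hence every element of $\adj{T, \sloops{G}}$ lies in $\trees{G}$.

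I expect the main obstacle to be precisely this locality bookkeeping: the Lemma only gives $\Adj{T', S} \subseteq \trees{G}$ for the \emph{matched} pair $(T', S)$ coming from a single tree, whereas the Proposition freely combines a simple tree obtained from one decomposition with loop trees harvested from arbitrarily many different decompositions. Making the informal statement ``$D$-validity is inherited, and adjoining a $D$-valid adjunct is harmless'' into a rigorous argument --- in particular checking that no internal node is ever damaged and that the root stays $X_0$ --- is the one delicate point; everything else is formal manipulation of the definitions and of the Lemma.
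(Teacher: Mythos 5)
Your proposal is correct and follows essentially the same route as the paper: the forward inclusion directly from the Lemma, and the reverse inclusion by induction on the length of the adjoining sequence, justified by the fact that each \(\alpha \in \sloops{G}\) is extracted from a valid derivation tree and therefore adjoining it preserves the node-local conditions defining \(\trees{G}\). The paper compresses the inductive step into one sentence (``since \(\alpha\) is extracted from some valid derivation tree in \(\trees{G}\), \(T' \vdash_\alpha T''\) is also in \(\trees{G}\)''), and your ``\(D\)-validity'' bookkeeping is exactly the detail it leaves implicit --- including your correct observation that the Lemma alone only covers matched pairs \((T', S)\), which is precisely why this extra argument is needed.
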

\begin{proof}
Left-to-right inclusion \(\subseteq\) is clear by Lemma.
Right-to-left inclusion \(\supseteq\) is shown by induction.
The base case \(T' \in \strees{G} \subseteq \trees{G}\) is trivial.
 Assume \(T' \in \trees{G}\). Then for any
 \(\alpha \in \sloops{G}\)
 such that \(\alpha\) is adjoinable to \(T'\), since \(\alpha\) is
 extracted from some valid derivation tree in \(\trees{G}\), \(T'
 \vdash_{\alpha} T''\) is also in \(\trees{G}\).
\end{proof}

\begin{thm}[Parikh \cite{parikh}]
\(\Phi_A(\lang{G})\) is semilinear.
\end{thm}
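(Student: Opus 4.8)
The plan is to combine the Proposition above with the evident additivity of the Parikh mapping under adjoining, so as to exhibit \(\Phi_A(\trees{G})\) — hence \(\Phi_A(\lang{G})\) — as a finite union of linear sets.

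First I would record the additivity lemma: whenever \(T \vdash_\alpha T'\), one has \(\Phi_A(T') = \Phi_A(T) + \Phi_A(\alpha)\). This is immediate from the definitions: writing \(T = C[X(T_1,\ldots,T_n)]\) and \(T' = C[\alpha[X(T_1,\ldots,T_n)]]\) with \(X = \root(\alpha)\), the yield of \(T'\) is obtained from that of \(T\) by inserting, on either side of the factor contributed by \(X(T_1,\ldots,T_n)\), precisely the two factors whose concatenation is \(\yield(\alpha[X(\epsilon)])\), so the letter multiplicities simply add. Iterating along an adjoining sequence, if \(T\) is obtained from \(T_0\) by adjoining \(\alpha_1, \ldots, \alpha_k\) (listed with multiplicity) then \(\Phi_A(T) = \Phi_A(T_0) + \sum_{i=1}^{k}\Phi_A(\alpha_i)\).

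Next, fix a simple tree \(T \in \strees{G}\) and a subset \(U \subseteq \sloops{G}\), and set \(\bm{b} \defeq \Phi_A(T) + \sum_{\alpha \in U}\Phi_A(\alpha)\). The key claim is that \(\Phi_A(\Adj{T,U})\) is empty when \(U\) is not adjoinable to \(T\), and otherwise is the linear set
\[
 \Big\{\, \bm{b} + \sum_{\alpha \in U} x_\alpha\,\Phi_A(\alpha) \;\Big|\; x_\alpha \in \nat \text{ for all } \alpha \in U \,\Big\}.
\]
Inclusion ``\(\subseteq\)'' is the additivity lemma, since each \(T' \in \Adj{T,U}\) is obtained by adjoining every element of \(U\) at least once. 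For ``\(\supseteq\)'' I would use the final remark of Section~1: when \(U\) is adjoinable to \(T\) there is a \(T' \in \Adj{T,U}\) obtained by adjoining each \(\alpha \in U\) exactly once, and such a \(T'\) contains the root non-terminal of every \(\alpha \in U\); hence from \(T'\) one may further adjoin each \(\alpha \in U\) any number of additional times in any order, which realizes every vector of the displayed set. As \(\sloops{G}\) is finite and \(\adj{T,\sloops{G}} = \bigcup_{U \subseteq \sloops{G}}\Adj{T,U}\), taking Parikh images shows that \(\Phi_A(\adj{T,\sloops{G}})\) is a finite union of linear sets, so it is semilinear.

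Finally, by the Proposition \(\trees{G} = \bigcup_{T \in \strees{G}}\adj{T,\sloops{G}}\) with \(\strees{G}\) finite, so \(\Phi_A(\trees{G})\) is a finite union of semilinear sets and hence semilinear; and \(\Phi_A(\lang{G}) = \Phi_A(\trees{G})\) as noted in Section~1. The only genuinely delicate point is the ``\(\supseteq\)'' direction above — that every vector assembled from a subset of the available pump vectors is actually reachable — but this is exactly what stability of adjoinability under adjoining (the ``Moreover'' sentence of Section~1) provides. The remainder is bookkeeping; in particular one should resist collapsing \(\Phi_A(\adj{T,\sloops{G}})\) into a single linear set, since not every subset of \(\sloops{G}\) is adjoinable to a given \(T\), so the union over subsets is essential.
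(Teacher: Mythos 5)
Your proposal is correct and follows essentially the same route as the paper: apply the Proposition, split \(\adj{T,\sloops{G}}\) into the sets \(\Adj{T,U}\) over subsets \(U \subseteq \sloops{G}\), and use the additivity \(\Phi_A(T'') = \Phi_A(T') + \Phi_A(\alpha)\) to identify each nonempty \(\Phi_A(\Adj{T,U})\) as a linear set (your base point \(\bm{b}\) with coefficients in \(\nat\) is just the paper's formulation with coefficients in \(\nat \setminus \{0\}\)). You merely spell out the two inclusions, including the reachability argument via the ``Moreover'' remark, which the paper leaves implicit.
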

\begin{proof}
\begin{align*}
 \Phi_A(\lang{G}) = \Phi_A(\trees{G}) =
\bigcup_{T \in \strees{G}}
\bigcup_{S \subseteq \sloops{G}} \Phi_A(\Adj{T, S})
\end{align*}
holds by Proposition.
If \(S\) is not adjoinable to \(T\)
 then \(\Phi_A(\Adj{T, S}) =
 \emptyset\).
 Otherwise, 
 \(\Phi_A(\Adj{T, S}) = \{ \Phi_A(T) + \sum_{i = 1}^{|S|} x_i
 \Phi_A(\alpha_i) \mid
 S = \{\alpha_1, \ldots, \alpha_{|S|}\},
 x_i \in \nat \setminus \! \{0\}\}\) holds since
 \(T' \vdash_{\alpha} T''\) implies
 \(\Phi_A(T'') = \Phi_A(T') + \Phi_A(\alpha)\).
 In both cases, \(\Phi_A(\Adj{T, S})\) is semilinear,
 hence those finite union \(\Phi_A(\lang{G})\) is semilinear.
\end{proof}


\bibliographystyle{splncs.bst}
\bibliography{ref}

\end{document}